\theoremstyle{plain}
\newtheorem{theorem}{Theorem}
\newtheorem{statement}{Statement}
\newtheorem{lemma}{Lemma}
\newtheorem{corollary}{Corollary}
\theoremstyle{definition}
\newtheorem{example}{Example}
\newtheorem{definition}{Definition}
\theoremstyle{remark}
\newtheorem{remark}{Remark}
\numberwithin{equation}{section}
\newcommand{\E}{{\mathrm{e}}}
\newcommand{\br}[2]{\{ #1 , #2 \}}
\newcommand{\Qt}{\widetilde{Q}}
\newcommand{\Hil}{\mathscr{H}}
\newcommand{\Hh}{\hat{\mathscr{H}}}
\newcommand{\Kh}{\hat{\mathscr{K}}}
\newcommand{\K}{\mathscr{K}}
\newcommand{\Ih}{\hat{\mathscr{I}}}
\newcommand{\Is}{\mathscr{I}}
\newcommand{\Yr}{\hat{\Psi}}
\newcommand{\MC}{\mathrm{MC}(\Hh)}
\newcommand{\coh}{\mathcal{H}}
\newcommand{\even}{_{\mathrm{even}}}
\newcommand{\odd}{_{\mathrm{odd}}}
\newcommand{\ord}[1]{^{(#1)}}
\newcommand{\dc}{\partial_{c}}
\newcommand{\dq}{\partial_{\theta}}
\DeclareMathOperator{\End}{End}
\DeclareMathOperator{\im}{im}
\DeclareMathOperator{\ad}{ad}
\begin{document}

\title{Maurer-Cartan methods in perturbative quantum mechanics}

\author{Andrey Losev}

\address{Shanghai Institute for Mathematics and Interdisciplinary Sciences,
Building 3, 62 Weicheng Road, Yangpu District, Shanghai, 200433, China}
\email{aslosev2@yandex.ru}

\author{Tim Sulimov}

\address{St. Petersburg Department of Steklov Mathematical Institute of
Russian Academy of Sciences, 27 Fontanka, St. Petersburg, 191023, Russia}
\email{optimus260@gmail.com}

\maketitle

\begin{abstract}
	We reformulate the time-independent Schr\"odinger equation as a
	Maurer-Cartan equation on the superspace of eigensystems of the former
	equation. We then twist the differential so that its cohomology becomes the
	space of solutions with a set energy. A perturbation of the Hamiltonian
	corresponds to a deformation of the twisted differential, leading to a
	simple recursive relation for the eigenvalue and eigenfunction corrections.
\end{abstract}

\tableofcontents
\newpage

\section{Introduction}%
\label{sec:introduction}

In~\cite{losev2023} we reformulated the perturbative one-dimensional quantum
scattering problem in the language of homological algebra. In particular, we
verified that the space of solutions of the free problem
\begin{equation}\label{eq:free_qm}
	H_0 \psi = E \psi
\end{equation}
with fixed $E > 0$ can be viewed as the cohomology group of the differential
\begin{equation}\label{eq:free_Q}
	Q_0 = c (H_0 - E),
\end{equation}
where $c$ is a Grassmann variable. We then proceeded to apply homotopy transfer
to a deformed differential
\begin{equation}
	Q = Q_0 + Q_1 = c (H_0 - E) + c V
\end{equation}
using the homotopy $h$ for $Q_0$. The induced differential turned out to contain
the on-shell scattering K-matrix.

This approach relies on the degeneracy of energy levels and the isospectrality
of scattering problems, i.e., for every positive energy $E$ there exists a
solution of the full problem
\begin{equation}
	H_0 \psi + V \psi = E \psi
\end{equation}
that tends to two separate solutions of~\eqref{eq:free_qm} as $x \to \pm\infty$.
Thus, the treatment of bound states (especially non-degenerate ones) under small
perturbations requires a different method, as the energy levels shift and the
corrections to the original wave functions place them outside of the $Q_0$
cohomology.

In this paper we continue the homological algebra approach to quantum mechanics
by tackling the time-independent Schr\"odinger equation
\begin{equation}\label{eq:0schr}
	H \psi = E \psi,
\end{equation}
where the energy $E$ is no longer fixed but rather an unknown on par with the
wave function $\psi$.

The main idea is presented in Section~\ref{sec:schr_eq_in_mc_form} where we
combine the eigenvector with its eigenvalue into a single object $\Psi$ and
view~\eqref{eq:0schr} as a quadratic equation in $\Psi$. Following the
Maurer-Cartan approach to deformation problems~\cite{shadrin2022} we show that
this equation can be written as a Maurer-Cartan equation
\begin{equation}\label{eq:0mc}
	Q \Psi + \tfrac{1}{2} \br{\Psi}{\Psi} = 0
\end{equation}
in an appropriate differential graded Lie algebra, where the differential $Q$
depends on the Hamiltonian $H$ but the bracket does not. We also write down the
corresponding gauge group and apply the twisting procedure using a given
Maurer-Cartan element $\Psi_0$, i.e., a solution of~\eqref{eq:0schr}.

In Section~\ref{sec:perturbation_theory} we take the next step by deforming the
differential. This allows us to study the corrections to the eigensystem
under a small perturbation of $H$. In particular, we effortlessly acquire a
simple recurrence relation for these corrections. In addition, we show how these
corrections can be presented as a sum over tree diagrams.

Finally, in Section~\ref{sec:conclusion} we discuss how the Maurer-Cartan
approach can be applied to Hamiltonians with known symmetries and degenerate
spectra.

\section{Schr\"odinger equation in Maurer-Cartan form}%
\label{sec:schr_eq_in_mc_form}

\subsection{The superspace of eigensystems}%
\label{sub:the_superspace_of_eigensystems}

Consider a quantum mechanical problem with Hamiltonian $H$ acting on the Hilbert
space $\Hil$. The time-independent Schr\"odinger equation
\begin{equation}
	H \psi = E \psi, \quad \psi \in \Hil, E \in \mathbb{R},
\end{equation}
is just an eigenproblem for a self-adjoint operator $H$.

For the purpose of unifying the eigenfunctions of $H$ with the corresponding
eigenvalues we introduce the \emph{superspace of eigensystems}
\begin{equation}\label{eq:eigsys}
	\Hh = \left(\mathbb{C}[\theta, c] \otimes \Hil \right) \oplus \mathbb{C}[c],
\end{equation}
where $\theta$ and $c$ are Grassmann variables:
\begin{equation}
	\theta^2 = c^2 = 0, \quad [\theta, c]_{+} = \theta c + c \theta = 0.
\end{equation}
Despite looking contrived, this superspace is a natural extension of the
superspace used in~\cite{losev2023}. In particular, $c$ is the <<ghost>>
corresponding to $H$ viewed as a symmetry of itself. However, the meaning of
$\theta$ is not clear yet.

In what follows <<hats>> over spaces will signify that they are in fact
superspaces. The general form of an element $\Psi \in \Hh$ is
\begin{equation}
	\Psi =
	\begin{pmatrix}
	\psi_1 + \theta \psi_2 + c \psi_3 + c \theta \psi_4 \\
	E_1 + c E_2
	\end{pmatrix}
\end{equation}
and $\Hh$ thus has a natural $\mathbb{Z}$ grading
\begin{equation}
	\Hh = \Hh_0 \oplus \Hh_1 \oplus \Hh_2
\end{equation}
based on total degree in $\theta$ and $c$. To keep track of signs it is also
convenient to use the $\mathbb{Z}_2$ grading
\begin{equation}
	\Hh = \Hh\even \oplus \Hh\odd.
\end{equation}

Next we turn $\Hh$ into a graded Lie algebra by defining the following bracket:
\begin{equation}
	\br{\Psi}{\Phi} =
	\begin{pmatrix}
		\Psi^{\mathrm{T}} \epsilon \Phi \\
		0
	\end{pmatrix}
	, \quad \epsilon =
	\begin{pmatrix}
		0 & 1 \\
		-1 & 0
	\end{pmatrix}.
\end{equation}
\begin{example}
	A bracket of two generic elements in $\Hh_0$ is
	\begin{equation}
		\br{\Psi}{\Phi} =
		\br{
		\begin{pmatrix}
			\psi \\
			E
		\end{pmatrix}
		}{
		\begin{pmatrix}
			\varphi \\
			\mathcal{E}
		\end{pmatrix}
		} =
		\begin{pmatrix}
			\mathcal{E} \psi - E \varphi \\
			0
		\end{pmatrix}.
	\end{equation}
\end{example}
\begin{example}
	A bracket of two generic elements in $\Hh_1$ is
	\begin{equation}
		\br{\Psi}{\Phi} =
		\br{
		\begin{pmatrix}
			\theta \psi_1 + c \psi_2 \\
			c E
		\end{pmatrix}
		}{
		\begin{pmatrix}
			\theta \varphi_1 + c \varphi_2 \\
			c \mathcal{E}
		\end{pmatrix}
		} = -
		\begin{pmatrix}
			c \theta (\mathcal{E} \psi_1 + E \varphi_1) \\
			0
		\end{pmatrix}.
	\end{equation}
\end{example}
\begin{remark}
	This bracket is \emph{graded-antisymmetric}, i.e., it is symmetric if both
	elements are odd and antisymmetric if at least one is even.
\end{remark}
We also define
\begin{equation}
	Q =
	\begin{pmatrix}
		c H & 0 \\
		0 & 0
	\end{pmatrix}
	\in \End{\Hh}.
\end{equation}
It can be easily checked that $Q$ is a differential of degree $1$, i.e., it
squares to zero and is a derivation of the bracket $\br{\cdot}{\cdot}$. Thus,
$\Hh$ is a differential graded Lie algebra.

\subsection{Maurer-Cartan equation}%
\label{sub:maurer_cartan_equation}

We now restrict our attention to the subspace $\Hh_1$ and reformulate the
eigenvalue problem in Maurer-Cartan terms.

\begin{definition}
	A Maurer-Cartan element is a solution of the Maurer-Cartan equation
	\begin{equation}\label{eq:mc}
		Q \Psi + \tfrac{1}{2} \br{\Psi}{\Psi} = 0
	\end{equation}
	in $\Hh_1$. The set of Maurer-Cartan elements is denoted by $\MC$.
\end{definition}

\begin{statement}
	$\Psi \in \MC$ if and only if
	\begin{equation}\label{eq:equiv}
		\Psi =
		\begin{pmatrix}
			\theta \psi + c \phi \\
			c E
		\end{pmatrix},
	\end{equation}
	where $\phi$ is any function and $H \psi = E \psi$.
\end{statement}

The freedom in the choice of $\phi$ in~\eqref{eq:equiv} seems irrelevant to
the eigenproblem.  Additionally, different normalizations of $\psi$ correspond
to different elements of $\MC$, which so far has no internal structure to
account for this multiplicity. Both of these issues are reconciled by the gauge
group associated to $\Hh$. The following definition and theorem are adapted
directly from~\cite{shadrin2022}.

\begin{definition}
	The gauge group $\Gamma$ associated to $\Hh$ is the group obtained from
	$\Hh_0$ via the Baker-Campbell-Hausdorff formula. Namely, $1_\Gamma = 0$ and
	for $\Psi, \Phi \in \Hh_0$ their product is
	\begin{equation}
		\Psi \cdot \Phi = \ln(\E^\Psi \E^\Phi) = \Psi + \Phi
		+ \tfrac{1}{2} \br{\Psi}{\Phi}
		+ \tfrac{1}{12} \br{\Psi}{\br{\Psi}{\Phi}}
		+ \ldots.
	\end{equation}
\end{definition}

\begin{remark}
	It can be easily seen that
	\begin{equation}
		\begin{pmatrix}
			\psi \\
			E
		\end{pmatrix}
		\cdot
		\begin{pmatrix}
			\varphi \\
			\mathcal{E}
		\end{pmatrix}
		=
		\begin{pmatrix}
			f_1(E,\mathcal{E}) \psi + f_2(E,\mathcal{E}) \varphi \\
			E + \mathcal{E}
		\end{pmatrix}.
	\end{equation}
	Although $f_1$ and $f_2$ can be computed, they are of no direct interest
	to us.
\end{remark}

\begin{theorem}
	The formula
	\begin{equation}
		\Phi . \Psi = \frac{\mathds{1} - \E^{\ad_\Phi}}{\ad_\Phi} (Q \Phi)
		+ \E^{\ad_\Phi} \Psi,
	\end{equation}
	where $\ad_\Phi = \br{\Phi}{-}$, defines a left action of the gauge group
	$\Gamma$ on $\MC$.	
\end{theorem}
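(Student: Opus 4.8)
The plan is to recognize the stated formula as the time-one map of a flow and then check the three defining properties of a left action: invariance of $\MC$, triviality of $1_\Gamma = 0$, and compatibility with the Baker--Campbell--Hausdorff product. Concretely, I would introduce the infinitesimal gauge transformation
\begin{equation}
	\delta_\Phi \Psi = \br{\Phi}{\Psi} - Q\Phi = \ad_\Phi \Psi - Q\Phi,
\end{equation}
the fundamental vector field of the action, and observe that the one-parameter family $\Psi(t) = (t\Phi) . \Psi$ obtained by substituting $\Phi \mapsto t\Phi$ in the formula solves the linear inhomogeneous ODE $\dot\Psi(t) = \br{\Phi}{\Psi(t)} - Q\Phi$ with $\Psi(0) = \Psi$. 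Integrating this equation reproduces the closed form in the statement at $t = 1$, and it is consistent that $\{t\Phi\}$ is a one-parameter subgroup since $\br{\Phi}{\Phi} = 0$ for even $\Phi$. The identity acts trivially at once, since $\delta_0 = 0$ and $\E^{\ad_0} = \mathds{1}$ collapse the formula to $0 . \Psi = \Psi$.

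For well-definedness I would show that the flow is tangent to $\MC$. Writing $F(\Psi) = Q\Psi + \tfrac12\br{\Psi}{\Psi}$ for the Maurer--Cartan defect and differentiating along the flow gives
\begin{equation}
	\frac{\D}{\D t} F(\Psi(t)) = Q\dot\Psi + \br{\dot\Psi}{\Psi},
\end{equation}
where I have used that the bracket is symmetric on the odd subspace $\Hh_1$. Substituting $\dot\Psi = \br{\Phi}{\Psi} - Q\Phi$, the contribution $-\br{Q\Phi}{\Psi}$ cancels the term $\br{Q\Phi}{\Psi}$ produced by the derivation rule $Q\br{\Phi}{\Psi} = \br{Q\Phi}{\Psi} + \br{\Phi}{Q\Psi}$, while $Q^2 = 0$ kills $Q^2\Phi$; the graded Jacobi identity together with the symmetry of the odd bracket then rewrites $\br{\br{\Phi}{\Psi}}{\Psi}$ as $\tfrac12\br{\Phi}{\br{\Psi}{\Psi}}$. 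The net result is the homogeneous linear equation
\begin{equation}
	\frac{\D}{\D t} F(\Psi(t)) = \br{\Phi}{F(\Psi(t))} = \ad_\Phi F(\Psi(t)).
\end{equation}
Since $F(\Psi(0)) = 0$ for $\Psi \in \MC$, uniqueness forces $F(\Psi(t)) \equiv 0$, so $\Phi . \Psi \in \MC$.

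The main obstacle is the compatibility with the group product, $(\Phi \cdot \Phi') . \Psi = \Phi . (\Phi' . \Psi)$, because the action is affine rather than linear and the inhomogeneous term $-Q\Phi$ must be tracked carefully. I would handle this by proving that $\Phi \mapsto \delta_\Phi$ is a homomorphism of Lie algebras into the affine vector fields on $\Hh_1$, that is $[\delta_\Phi, \delta_{\Phi'}] = \delta_{\br{\Phi}{\Phi'}}$. In the commutator of the affine fields the linear parts give $[\ad_\Phi, \ad_{\Phi'}] = \ad_{\br{\Phi}{\Phi'}}$ by the Jacobi identity, while the inhomogeneous parts reduce the claim to the single identity
\begin{equation}
	Q\br{\Phi}{\Phi'} = \br{\Phi}{Q\Phi'} - \br{\Phi'}{Q\Phi},
\end{equation}
which follows immediately from $Q$ being a derivation and the graded antisymmetry of the bracket. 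Once the homomorphism property holds, the desired intertwining is the standard statement that exponentiating a Lie algebra action integrates to a group action whose multiplication is given by BCH; in the present setting this bookkeeping is painless because $\theta^2 = c^2 = 0$ truncates every series and, as one checks, $\ad_\Phi$ acts on its own image simply as multiplication by minus the scalar (eigenvalue) component of $\Phi$, so all the exponentials reduce to elementary functions of that scalar.
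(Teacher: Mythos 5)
Your proposal is essentially correct, but note that the paper itself offers no proof of this theorem at all: it is stated as ``adapted directly from'' the Dotsenko--Shadrin--Vallette book \cite{shadrin2022}, and the burden of proof is outsourced to that reference. So the comparison here is between your self-contained argument and a citation. Your route --- characterizing $(t\Phi).\Psi$ as the flow of the affine vector field $\delta_\Phi \Psi = \br{\Phi}{\Psi} - Q\Phi$, showing the Maurer--Cartan defect $F(\Psi(t))$ satisfies the homogeneous equation $\dot F = \ad_\Phi F$ (so $\MC$ is preserved), and then integrating the Lie algebra homomorphism $\Phi \mapsto \delta_\Phi$ to the BCH group --- is in fact the standard proof in the deformation-theory literature, so you have effectively reconstructed the argument the paper points to. The computational identities you rely on all check out in this DGLA: $\frac{\D}{\D t}F = Q\dot\Psi + \br{\dot\Psi}{\Psi}$ uses that both $\Psi$ and $\dot\Psi$ are odd; the cancellation $Q\br{\Phi}{\Psi} - \br{Q\Phi}{\Psi} = \br{\Phi}{Q\Psi}$ uses that $\Phi$ is even; $\br{\br{\Phi}{\Psi}}{\Psi} = \tfrac12\br{\Phi}{\br{\Psi}{\Psi}}$ follows from graded Jacobi plus symmetry of the odd bracket; and $Q\br{\Phi}{\Phi'} = \br{\Phi}{Q\Phi'} - \br{\Phi'}{Q\Phi}$ is the derivation property plus antisymmetry. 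Your closing observation that $\ad_\Phi$ acts on elements with vanishing eigenvalue component as multiplication by $-\mathcal{E}$ is also verifiably true in this model and is exactly what makes the exponentials in the paper's explicit gauge-action formula collapse to $\E^{-\mathcal{E}}$ and $\frac{1-\E^{-\mathcal{E}}}{\mathcal{E}}$. The only soft spot is the final appeal to ``the standard statement'' that a Lie algebra homomorphism into vector fields exponentiates to a group action with BCH multiplication: in general this needs a nilpotence, filtration, or convergence hypothesis, and a fully rigorous write-up should either verify flow composition directly (easy here, since everything reduces to the solvable $(\varphi,\mathcal{E})$ algebra where the series are elementary functions) or state the hypothesis under which the integration theorem applies. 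What your approach buys over the paper's is a self-contained verification tailored to this concrete $\Hh$; what the paper's citation buys is coverage of the general complete-DGLA setting without re-proving it.
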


Given $\Phi = (\varphi ~ \mathcal{E})^{\mathrm{T}}$ one readily gets
\begin{equation}
	\ad_\Phi =
	\begin{pmatrix}
		- \mathcal{E} & \varphi \\
		0 & 0
	\end{pmatrix}
	, \qquad
	\E^{\ad_\Phi} =
	\begin{pmatrix}
		\E^{-\mathcal{E}} & \frac{1 - \E^{-\mathcal{E}}}{\mathcal{E}} \varphi \\
		0 & 1
	\end{pmatrix},
\end{equation}
as well as
\begin{equation}
	\frac{\mathds{1} - \E^{\ad_\Phi}}{\ad_\Phi} (Q \Phi) =
	\begin{pmatrix}
		c \frac{\E^{-\mathcal{E}} - 1}{\mathcal{E}} H \varphi \\
		0
	\end{pmatrix},
\end{equation}
producing the formula for the gauge action
\begin{equation}
	\Phi .
	\begin{pmatrix}
		\theta \psi + c \phi \\
		c E
	\end{pmatrix}
	=
	\begin{pmatrix}
		\theta \E^{-\mathcal{E}} \psi + c \left( \E^{-\mathcal{E}} \phi
		+ \frac{\E^{-\mathcal{E}} - 1}{\mathcal{E}} (H - E) \varphi \right) \\
		c E
	\end{pmatrix}.
\end{equation}

As we can see, using an appropriate gauge transformation we can fix the
normalization of $\psi$ as well as remove all $\im(H - E)$ components from
$\phi$. This still leaves $\phi \in \ker(H - E)$ not fixed. The meaning of this
freedom is unclear.

Once we have a solution of~\eqref{eq:mc} we can twist the differential:

\begin{corollary}
	Let $\Psi \in \Hh_1$ be a solution of~\eqref{eq:mc}, then
	\begin{equation}
		\Qt = Q + \br{\Psi}{\cdot}
	\end{equation}
	is also a differential. Moreover, if
	\begin{equation}
		\Psi =
		\begin{pmatrix}
			\theta \psi \\
			c E
		\end{pmatrix}
	\end{equation}
	then
	\begin{equation}
		\Qt =
		\begin{pmatrix}
			c (H - E) & \theta \psi \\
			0 & 0
		\end{pmatrix}.
	\end{equation}
\end{corollary}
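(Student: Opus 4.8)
The first assertion is the standard statement that twisting a differential graded Lie algebra by a Maurer-Cartan element again yields a differential, and I would prove it purely formally from the three structural facts already established: $Q^2 = 0$, that $Q$ is a degree-$1$ derivation of the bracket, and the graded Jacobi identity. First I would note that $\Qt = Q + \ad_\Psi$ is again of degree $1$: $Q$ raises total degree by one, and since the bracket adds total degrees while $\Psi \in \Hh_1$, so does $\ad_\Psi = \br{\Psi}{\cdot}$. It is also a derivation of the bracket, being the sum of the derivation $Q$ and the derivation $\ad_\Psi$, the latter being a derivation precisely by the graded Jacobi identity. Thus the only substantive point is $\Qt^2 = 0$.

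To establish $\Qt^2 = 0$ I would expand
\begin{equation}
	\Qt^2 = Q^2 + Q\,\ad_\Psi + \ad_\Psi\, Q + \ad_\Psi^2
\end{equation}
and treat the four terms separately. The first vanishes since $Q$ is a differential. For the two cross terms, applying them to an arbitrary $\Phi$ and using that $Q$ is a degree-$1$ derivation with $\Psi$ odd gives $Q\br{\Psi}{\Phi} = \br{Q\Psi}{\Phi} - \br{\Psi}{Q\Phi}$, so that $Q\,\ad_\Psi + \ad_\Psi\,Q = \ad_{Q\Psi}$. For the last term I would invoke the graded Jacobi identity with both slots equal to the odd element $\Psi$, which yields $\ad_\Psi^2 = \tfrac{1}{2}\ad_{\br{\Psi}{\Psi}}$. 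Collecting the surviving pieces,
\begin{equation}
	\Qt^2 = \ad_{Q\Psi} + \tfrac{1}{2}\ad_{\br{\Psi}{\Psi}} = \ad_{Q\Psi + \frac{1}{2}\br{\Psi}{\Psi}},
\end{equation}
which vanishes because $\Psi \in \MC$ solves the Maurer-Cartan equation~\eqref{eq:mc}.

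For the explicit matrix form I would simply substitute $\Psi = (\theta\psi ~ cE)^{\mathrm{T}}$ into the defining formula of the bracket and add $Q$. Acting on a general $\Phi = (\phi ~ \mathcal{E})^{\mathrm{T}} \in \Hh$, with $\phi \in \mathbb{C}[\theta,c] \otimes \Hil$ and $\mathcal{E} \in \mathbb{C}[c]$, the bracket gives $\br{\Psi}{\Phi} = (\Psi^{\mathrm{T}}\epsilon\Phi ~ 0)^{\mathrm{T}}$ with $\Psi^{\mathrm{T}}\epsilon\Phi = \theta\psi\,\mathcal{E} - cE\,\phi$, while $Q\Phi = (cH\phi ~ 0)^{\mathrm{T}}$. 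Adding these, the top component becomes $c(H - E)\phi + \theta\psi\,\mathcal{E}$ and the bottom component is identically zero, which is exactly the action of the claimed matrix.

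The only real care is in the sign bookkeeping. The factor $-1$ in the Leibniz rule (coming from $\Psi$ being odd) is what makes the two cross terms combine into $\ad_{Q\Psi}$ rather than cancel, and the factor $\tfrac{1}{2}$ in $\ad_\Psi^2 = \tfrac{1}{2}\ad_{\br{\Psi}{\Psi}}$ is exactly what matches the $\tfrac{1}{2}$ in the Maurer-Cartan equation; both rely on $\Psi$ being odd. In the explicit computation the same subtlety reappears as the reorderings $\theta c = -c\theta$, which must be tracked to recover the correct signs, as illustrated in the preceding examples. Everything else is formal.
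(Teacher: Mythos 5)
Your proposal is correct, and it supplies exactly the argument the paper leaves implicit: the corollary is stated without proof, as an instance of the standard twisting procedure from the cited Dotsenko--Shadrin--Vallette reference, and your decomposition $\Qt^2 = Q^2 + (Q\ad_\Psi + \ad_\Psi Q) + \ad_\Psi^2 = \ad_{Q\Psi + \frac{1}{2}\br{\Psi}{\Psi}}$ is precisely that standard argument, with the signs (the Leibniz factor $(-1)^{|\Psi|}$ and the $\tfrac{1}{2}$ from the Jacobi identity applied to an odd element) handled correctly. Your explicit matrix computation $\Psi^{\mathrm{T}}\epsilon\Phi = \theta\psi\,\mathcal{E} - cE\,\phi$ likewise agrees with the paper's bracket conventions as illustrated in its Examples, so nothing is missing.
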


\begin{corollary}
	Let $\Psi \in \Hh_1$ be a solution of~\eqref{eq:mc} and let $\Phi \in \Hh_1$
	be such that $\Psi + \Phi$ is another solution of~\eqref{eq:mc}. Then
	\begin{equation}
		\Qt \Phi + \tfrac{1}{2} \br{\Phi}{\Phi} = 0.
	\end{equation}
\end{corollary}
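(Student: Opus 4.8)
The plan is to expand the Maurer-Cartan equation satisfied by $\Psi + \Phi$, subtract the one satisfied by $\Psi$ alone, and recognize the twisted differential $\Qt$ from the preceding corollary in what survives. First I would invoke linearity of $Q$ together with bilinearity of the bracket to expand the hypothesis
\[
0 = Q(\Psi + \Phi) + \tfrac{1}{2} \br{\Psi + \Phi}{\Psi + \Phi}
\]
into the six terms $Q\Psi$, $Q\Phi$, $\tfrac{1}{2}\br{\Psi}{\Psi}$, $\tfrac{1}{2}\br{\Psi}{\Phi}$, $\tfrac{1}{2}\br{\Phi}{\Psi}$, and $\tfrac{1}{2}\br{\Phi}{\Phi}$.

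Next I would collapse the two cross terms into one. Since $\Psi, \Phi \in \Hh_1$ are both odd, the remark guarantees that the bracket is \emph{symmetric} on them, so $\br{\Psi}{\Phi} = \br{\Phi}{\Psi}$ and the two halves add to a single $\br{\Psi}{\Phi}$. Subtracting the given Maurer-Cartan identity $Q\Psi + \tfrac{1}{2}\br{\Psi}{\Psi} = 0$ then removes the purely-$\Psi$ terms and leaves
\[
Q\Phi + \br{\Psi}{\Phi} + \tfrac{1}{2}\br{\Phi}{\Phi} = 0.
\]
Finally I would read off the conclusion using the definition $\Qt = Q + \br{\Psi}{\cdot}$ supplied by the previous corollary: the first two terms are exactly $\Qt\Phi$, which yields $\Qt\Phi + \tfrac{1}{2}\br{\Phi}{\Phi} = 0$ as claimed.

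The argument is a finite algebraic identity in $\Hh$, so I anticipate no analytic or convergence obstacles. The only point requiring genuine care is the graded symmetry of the bracket: the whole computation hinges on the cross terms being symmetric rather than antisymmetric, and this holds precisely because both arguments are odd. For mixed-parity elements the two terms would cancel instead of doubling, so the single step I would double-check is that we are indeed in the odd-odd case of the remark, which is forced by $\Psi, \Phi \in \Hh_1$.
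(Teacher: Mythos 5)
Your proof is correct and is exactly the computation the paper leaves implicit (the corollary is stated without proof): expand the Maurer-Cartan equation for $\Psi + \Phi$, use the symmetry of the bracket on odd elements to combine the cross terms into $\br{\Psi}{\Phi}$, subtract the equation for $\Psi$, and identify $Q\Phi + \br{\Psi}{\Phi} = \Qt\Phi$. Your emphasis on the odd-odd symmetry of the bracket is the right point to flag, since that is what makes the cross terms add rather than cancel.
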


\subsection{Kernel and cohomology}%
\label{sub:kernel_and_cohomology}

Let us introduce some notation:
\begin{equation}
	H_E = H - E, \qquad
	\K = \ker{H_E}, \qquad
	\Kh = \ker{\Qt},
\end{equation}
\begin{equation}
	\Is = \im{H_E}, \qquad
	\Ih = \im{\Qt},
\end{equation}
where $\Qt$ is constructed using a nonzero $\psi \in \K$:
\begin{equation}
	\Qt =
	\begin{pmatrix}
		c (H - E) & \theta \psi \\
		0 & 0
	\end{pmatrix}.
\end{equation}

\begin{statement}
	$\Psi \in \Kh$ if and only if there exist $\psi_1, \psi_2 \in \K$ and
	$\varphi_1, \varphi_2 \in \Hil$ such that
	\begin{equation}\label{eq:ker_Qt}
		\Psi =
		\begin{pmatrix}
			\psi_1 + \theta \psi_2 + c \varphi_1 + c \theta \varphi_2 \\
			0
		\end{pmatrix}.
	\end{equation}
\end{statement}

\begin{proof}
	The <<if>> part is trivial. Now consider $\Qt$ acting on a generic element
	in $\Hh$:
	\begin{align}
		&
		\begin{pmatrix}
			c H_E & \theta \psi \\
			0 & 0
		\end{pmatrix}
		\begin{pmatrix}
			\psi_1 + \theta \psi_2 + c \varphi_1 + c \theta \varphi_2 \\
			E_1 + c E_2
		\end{pmatrix}
		\\ \label{eq:im_Qt}
		&=
		\begin{pmatrix}
			\theta E_1 \psi + c H_E \psi_1 + c \theta (H_E \psi_2 - E_2 \psi) \\
			0
		\end{pmatrix}.
	\end{align}
	For this to be zero we necessarily have $E_1 = 0$, $H_E \psi_1 = 0$, and
	$H_E \psi_2 = E_2 \psi$. But the image of a self-adjoint operator is
	orthogonal to its kernel. Thus, $H_E \psi_2 = 0$ and consequently $E_2 = 0$.
	This completes the proof.
\end{proof}
Next we describe the cohomology of $\Qt$.
For simplicity and clarity we shall assume that $\K = \langle \psi \rangle$.
This means that the eigenstate $\psi$ is non-degenerate~--- a common occurrence
for physical systems.
\begin{corollary}
	The cohomology of $\Qt$ is
	\begin{equation}
		\coh_{\Qt} \cong \mathbb{C}[c] \otimes \K.
	\end{equation}
\end{corollary}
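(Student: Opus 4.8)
The plan is to compute $\coh_{\Qt} = \Kh/\Ih$ by taking the quotient separately in each $\mathbb{Z}$-degree. The kernel $\Kh$ is already supplied by the preceding Statement, so the first task is to extract the image $\Ih$ from the calculation carried out in the proof of that Statement, namely equation~\eqref{eq:im_Qt}. Reading off that formula, as $E_1, E_2$ range over $\mathbb{C}$ and $\psi_1, \psi_2$ range over $\Hil$, the image consists of elements whose $\theta$-component is $E_1 \psi$, whose $c$-component is $H_E \psi_1$, and whose $c\theta$-component is $H_E \psi_2 - E_2 \psi$. Hence $\Ih$ is spanned by $\theta\langle\psi\rangle$ and $c\Is$ in degree one and by $c\theta(\Is + \langle\psi\rangle)$ in degree two.

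The essential structural fact I would invoke is the orthogonal decomposition $\Hil = \K \oplus \Is$, valid because $H_E$ is self-adjoint. This is precisely where the non-degeneracy hypothesis $\K = \langle\psi\rangle$ enters, and it does so in two places: it promotes the $\theta$-part of the image from $\theta\langle\psi\rangle$ to the full $\theta\K$, and it makes $\Is + \langle\psi\rangle = \Is + \K = \Hil$, so that the $c\theta$-part of the image fills out all of $c\theta\Hil$. Dropping the hypothesis would leave both pieces non-maximal and produce extra surviving cohomology.

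With the image pinned down I would take the quotient degree by degree. In degree zero the kernel contributes $\K$ while the image is trivial, giving $\K$. In degree one the kernel $\theta\K \oplus c\Hil$ and the image $\theta\K \oplus c\Is$ differ only in the $c$-slot, so the quotient is $c(\Hil/\Is) \cong c\K$, again using the orthogonal decomposition. In degree two both kernel and image equal $c\theta\Hil$, so the cohomology vanishes. Assembling the surviving pieces yields $\coh_{\Qt} \cong \K \oplus c\K \cong \mathbb{C}[c] \otimes \K$.

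The computation itself is linear-algebra bookkeeping with no deep analytic obstacle. The one genuinely load-bearing step is recognizing that self-adjointness of $H$ together with the non-degeneracy of $\psi$ is exactly what forces the degree-two part of the image to exhaust $c\theta\Hil$; this is where the hypothesis $\K = \langle\psi\rangle$ is doing the real work, and relaxing it would leave a nonzero $\coh^2$.
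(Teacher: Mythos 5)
Your proof is correct and follows essentially the same route as the paper's: read the image of $\Qt$ off equation~\eqref{eq:im_Qt}, compare it with the kernel~\eqref{eq:ker_Qt}, and quotient degree by degree. You are merely more explicit than the paper about the two places where self-adjointness (via $\Hil = \K \oplus \Is$) and the non-degeneracy hypothesis $\K = \langle\psi\rangle$ enter, which the paper leaves implicit in the phrase ``appropriate choice of $\psi_1$, $\psi_2$, $E_1$ and $E_2$.''
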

\begin{proof}
	Elements of $\Kh$ and $\Ih$ are described by~\eqref{eq:ker_Qt}
	and~\eqref{eq:im_Qt}, respectively. Comparing the two forms we can see that
	appropriate choice of $\psi_1$, $\psi_2$, $E_1$ and $E_2$
	in~\eqref{eq:im_Qt} allows us to acquire any element of
	\begin{equation}
		\begin{pmatrix}
			\theta \K + c \Is + c \theta \Hil \\
			0
		\end{pmatrix}.
	\end{equation}
	When we factor the closed elements~\eqref{eq:ker_Qt} by the exact elements
	of this space we are left with
	\begin{equation}
		\begin{pmatrix}
			\K + c \K \\
			0
		\end{pmatrix}.
	\end{equation}
\end{proof}

\subsection{Homotopy}%
\label{sub:homotopy}

We introduce a homotopy
\begin{equation}
	h =
	\begin{pmatrix}
		\dc G & 0 \\
		\dq Y & 0
	\end{pmatrix},
\end{equation}
where $G$ is defined as
\begin{equation}\label{eq:G}
	G\big|_{\K} = 0, \qquad
	G\big|_{\Hil \setminus \K} = H_E^{-1},
\end{equation}
and $Y$ is a projection functional onto $\psi$ in $\Hil$:
\begin{equation}
	Y(\varphi) = (\psi, \varphi).
\end{equation}
Here we assume that $\psi$ is appropriately normalized using the gauge group.

\begin{remark}
	This homotopy squares to zero, since $Y G \varphi = 0$ for any $\varphi$.
\end{remark}

\begin{lemma}
	The anti-commutator of $\Qt$ and $h$ is
	\begin{equation}
		[\Qt, h]_{+} = \mathds{1} - \Pi_{\coh_{\Qt}}.
	\end{equation}
\end{lemma}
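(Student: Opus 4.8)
The plan is to verify this operator identity by evaluating both compositions $\Qt h$ and $h \Qt$ on a completely general element
\[
	\Psi = \begin{pmatrix} \psi_1 + \theta \psi_2 + c \psi_3 + c \theta \psi_4 \\ E_1 + c E_2 \end{pmatrix} \in \Hh
\]
and adding the results. I already know from~\eqref{eq:im_Qt} how $\Qt$ acts, and the action of $h$ is obtained the same way: one applies $G$ (respectively the functional $Y$) to the $\Hil$-valued coefficients and then strips a single Grassmann generator with $\dc$ (respectively $\dq$). Because the second column of both $\Qt$ and $h$ is trivial, the entire computation reduces to bookkeeping of the four functions $\psi_i$ and the two numbers $E_i$.

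The engine of the proof is a short list of identities that I would record first. From the definition~\eqref{eq:G} of $G$ and the gauge-fixed normalization $(\psi,\psi)=1$ one gets $G \psi = 0$ and the two-sided relation $H_E G = G H_E = \mathds{1} - \Pi_\K$, where $\Pi_\K \varphi = (\psi,\varphi)\psi = Y(\varphi)\,\psi$ is the orthogonal projector onto $\K = \langle \psi \rangle$; the two-sidedness uses that $\Is = \im H_E$ is orthogonal to $\K$. Self-adjointness of $H$ gives $Y H_E = 0$, since $Y(H_E \varphi) = (H_E \psi, \varphi) = 0$, and normalization gives $Y \psi = 1$. These relations are exactly what is needed to collapse every occurrence of $G$, $H_E$ and $Y$ that appears in the two compositions.

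Feeding these in, $h \Qt$ reconstructs the diagonal top part $\psi_1 + \theta \psi_2$ together with the full bottom entry $E_1 + c E_2$, up to a defect $-Y(\psi_1)\,\psi - \theta\, Y(\psi_2)\,\psi$, whereas $\Qt h$ reconstructs the remaining top entries $c \psi_3 + c \theta \psi_4$ up to a defect $-c\, Y(\psi_3)\,\psi$ and, in addition, produces a spurious term $+\theta\, Y(\psi_2)\,\psi$ arising where the off-diagonal $\theta \psi$ block of $\Qt$ (the piece introduced by twisting) meets the $\dq Y$ block of $h$. The crucial observation is that this spurious term cancels precisely against the $-\theta\, Y(\psi_2)\,\psi$ left over from $h\Qt$. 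What survives is $\Psi$ itself minus
\[
	\begin{pmatrix} Y(\psi_1)\,\psi + c\, Y(\psi_3)\,\psi \\ 0 \end{pmatrix}.
\]
Since this last operator is idempotent (because $Y\psi = 1$) and its image is exactly $\mathbb{C}[c] \otimes \K$, which is the cohomology computed in the preceding corollary, it is the projector $\Pi_{\coh_{\Qt}}$, and the lemma follows.

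I expect the only genuine hazard to be Grassmann sign bookkeeping: the derivatives $\dc$ and $\dq$ acting on the mixed monomial $c\theta$ require moving the relevant generator to the leftmost position before stripping it, and a single dropped sign in the $c\theta$-sector would spoil the cancellation of the $\theta\psi$ cross terms on which the whole identity hinges. A secondary point worth spelling out cleanly is that the surviving operator really projects \emph{onto the cohomology} rather than merely onto some complement of $\Ih$; this is precisely where $Y\psi = 1$ — equivalently, the gauge-fixed normalization of $\psi$ — is used.
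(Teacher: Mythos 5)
Your proof is correct and takes essentially the same route as the paper's: both verify the anticommutator by direct computation using the same key identities $G\psi = 0$, $H_E G = G H_E = \mathds{1} - \Pi_{\K}$, $Y H_E = 0$ (self-adjointness) and $Y\psi = 1$ (normalization), and your "spurious term cancellation" of $\theta\, Y(\psi_2)\,\psi$ is exactly the paper's step of combining $-\Pi_{\K} + \theta \dq \Pi_{\K}$ into $-\dq\theta\,\Pi_{\K}$. The only difference is presentational: the paper computes at the level of operator-valued matrices with Grassmann derivatives, while you evaluate both compositions on a general element of $\Hh$.
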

\begin{proof}
	We have to keep track of what the individual components are acting on to do
	the computation:
	\begin{equation}
		\Qt h + h \Qt =
		\begin{pmatrix}
			c \dc H_E G + \theta \dq \psi Y & 0 \\
			0 & 0
		\end{pmatrix}
		+
		\begin{pmatrix}
			\dc c G H_E & \dc \theta G \psi \\
			\dq c Y H_E & \dq \theta Y \psi
		\end{pmatrix}.
	\end{equation}
	In particular,
	\begin{equation}
		(G \psi) (\mathcal{E}) = \mathcal{E} G (\psi) = 0
	\end{equation}
	by definition of $G$. For the same reason
	\begin{equation}
		H_E G = G H_E = 1 - \Pi_{\K},
	\end{equation}
	where $\Pi$ denotes the projection operator. Additionally
	\begin{equation}
		(Y H_E) (\varphi) = (\psi, H_E \varphi) =
		(H_E \psi, \varphi) = 0,
	\end{equation}
	and also
	\begin{equation}
		(\psi Y) (\varphi) = \psi (\psi, \varphi) =
		\Pi_{\psi} \varphi, \qquad
		(Y \psi) (\mathcal{E}) = \mathcal{E} Y(\psi) = \mathcal{E}.
	\end{equation}
	Combining all of this gives
	\begin{equation}
		[\Qt, h]_{+} =
		\begin{pmatrix}
			1 - \Pi_{\K} + \theta \dq \Pi_{\psi} & 0 \\
			0 & 1
		\end{pmatrix},
	\end{equation}
	which can be further simplified if we recall that $\Pi_{\psi} = \Pi_{\K}$,
	producing
	\begin{equation}
		[\Qt, h]_{+} =
		\begin{pmatrix}
			1 - \dq \theta \Pi_{\K} & 0 \\
			0 & 1
		\end{pmatrix}
		= \mathds{1} - \Pi_{\coh_{\Qt}}.
	\end{equation}
\end{proof}

\section{Perturbation theory}%
\label{sec:perturbation_theory}

Let us now develop perturbation theory using the formalism above.
We would like to solve the quantum mechanical problem
\begin{equation}\label{eq:pert_pr}
	H \psi = (H_0 + \lambda V) \psi = E \psi
\end{equation}
perturbatively in $\lambda$, assuming we can solve the non-perturbed problem
\begin{equation}\label{eq:nonpert_pr}
	H_0 \psi\ord{0} = E\ord{0} \psi\ord{0}.
\end{equation}
\begin{remark}
	The parameter $\lambda$ can be thought of as small, but is actually
	there only to keep track of the powers of $V$.
\end{remark}

\subsection{Non-perturbed problem}%
\label{sub:non_perturbed_problem}

In this section we only recall and adjust the notation from before. Let
\begin{equation}
	\Psi\ord{0} =
	\begin{pmatrix}
		\theta \psi\ord{0} \\
		c E\ord{0}
	\end{pmatrix}
\end{equation}
be a solution of
\begin{equation}\label{eq:mc_nonpert}
	Q_0 \Psi\ord{0} + \tfrac{1}{2} \br{\Psi\ord{0}}{\Psi\ord{0}} = 0,
\end{equation}
where
\begin{equation}
	Q_0 =
	\begin{pmatrix}
		c H_0 & 0 \\
		0 & 0
	\end{pmatrix}.
\end{equation}
We embed $\Psi\ord{0}$ into $Q_0$ by defining
\begin{equation}\label{eq:Qt0}
	\Qt_0 = Q_0 + \br{\Psi\ord{0}}{\cdot} =
	\begin{pmatrix}
		c (H_0 - E\ord{0}) & \theta \psi\ord{0} \\
		0 & 0
	\end{pmatrix}.
\end{equation}
The corresponding homotopy is
\begin{equation}
	h_0 =
	\begin{pmatrix}
		\dc G_0 & 0 \\
		\dq Y_0 & 0
	\end{pmatrix},
\end{equation}
where $G_0$ is defined as
\begin{equation}\label{eq:G}
	G_0\big|_{\K_0} = 0, \quad
	G_0\big|_{\Hil \setminus \K_0} = (H_0 - E\ord{0})^{-1}, \quad
	\K_0 = \ker(H_0 - E\ord{0})
\end{equation}
and $Y_0$ is a linear functional on $\Hil$:
\begin{equation}
	Y_0(\varphi) = (\psi\ord{0}, \varphi).
\end{equation}
We also have
\begin{equation}
	[\Qt_0, h_0]_{+} = \mathds{1} - \Pi_{\coh_{\Qt_0}},
\end{equation}
where
\begin{equation}\label{eq:coh_0}
	\coh_{\Qt_0} =
	\begin{pmatrix}
		\K_0 + c \K_0 \\
		0
	\end{pmatrix}.
\end{equation}

\subsection{Perturbation}%
\label{sub:perturbation}

It is clear that equation~\eqref{eq:pert_pr} can be rendered in Maurer-Cartan
form as
\begin{equation}\label{eq:mc_pert}
	Q \Psi + \tfrac{1}{2} \br{\Psi}{\Psi} = 0,
\end{equation}
where $\Psi \in \Hh_1$, $Q = Q_0 + \lambda Q_1$, and
\begin{equation}
	Q_1 =
	\begin{pmatrix}
		c V & 0 \\
		0 & 0
	\end{pmatrix}.
\end{equation}
In accordance with our original intention we search for $\Psi$ in the form
\begin{equation}\label{eq:split}
	\Psi = \Psi\ord{0} + \hat{\Psi}, \qquad
	\hat{\Psi} =
	\begin{pmatrix}
		\theta \hat{\psi} \\
		c \hat{E}
	\end{pmatrix},
\end{equation}
\begin{equation}\label{eq:Yr_series}
	\hat{\Psi} = \sum_{k = 1}^{\infty} \lambda^k \Psi\ord{k}
	= \sum_{k = 1}^{\infty} \lambda^k
	\begin{pmatrix}
		\theta \psi\ord{k} \\
		c E\ord{k}
	\end{pmatrix}.
\end{equation}
To fix the normalization we impose that
\begin{equation}\label{eq:norm}
	(\psi\ord{0}, \psi\ord{k}) = \delta_{k,0}.
\end{equation}

Substituting~\eqref{eq:split} into~\eqref{eq:mc_pert} we get
\begin{align}
	&{} Q_0 \Psi\ord{0} + \lambda Q_1 \Psi\ord{0} + Q_0 \Yr
	+ \lambda Q_1 \Yr \\
	&+ \tfrac{1}{2} \br{\Psi\ord{0}}{\Psi\ord{0}}
	+ \tfrac{1}{2} \br{\Psi\ord{0}}{\Yr}
	+ \tfrac{1}{2} \br{\Yr}{\Psi\ord{0}}
	+ \tfrac{1}{2} \br{\Yr}{\Yr} = 0,
\end{align}
which can be rearranged using~\eqref{eq:mc_nonpert},~\eqref{eq:Qt0}, and the
properties of the bracket:
\begin{equation}
	\Qt_0 \Yr = - \lambda Q_1 \Psi\ord{0} - \lambda Q_1 \Yr
	- \tfrac{1}{2} \br{\Yr}{\Yr}.
\end{equation}
Before dealing with $\Qt_0$ using the homotopy we must check that the
obstructions vanish, i.e., the r.h.s. is zero in $\coh_{\Qt_0}$. This is indeed
the case, as can be seen from the Grassmann structure: the r.h.s. has to be
proportional to $c \theta$, which puts it outside $\coh_{\Qt_0}$, given
by~\eqref{eq:coh_0}. Applying $h_0$ to the l.h.s. gives
\begin{align}
	h_0 \Qt_0 \Yr &= (1 - \Pi_{\coh_{\Qt_0}} - \Qt_0 h_0) \Yr \\
	&=
	\begin{pmatrix}
		1 - \dq \theta \Pi_{\K_0} & 0 \\
		0 & 1
	\end{pmatrix}
	\begin{pmatrix}
		\theta \hat{\psi} \\
		c \hat{E}
	\end{pmatrix}
	\\
	&\quad-
	\begin{pmatrix}
		c (H_0 - E\ord{0}) & \theta \psi\ord{0} \\
		0 & 0
	\end{pmatrix}
	\begin{pmatrix}
		\dc G_0 & 0 \\
		\dq Y_0 & 0
	\end{pmatrix}
	\begin{pmatrix}
		\theta \hat{\psi} \\
		c \hat{E}
	\end{pmatrix}
	\\
	&=
	\begin{pmatrix}
		\theta \hat{\psi} \\
		c \hat{E}
	\end{pmatrix}
	-
	\begin{pmatrix}
		\theta \Pi_{\K_0} \hat{\psi} \\
		0
	\end{pmatrix}
	\\
	&= (1 - \Pi_{\Kh_0}) \Yr = \Yr,
\end{align}
where the last equality holds due to~\eqref{eq:norm}. Consequently,
\begin{equation}\label{eq:recurrence}
	\Yr = - \lambda h_0 Q_1 \Psi\ord{0} - \lambda h_0 Q_1 \Yr
	- \tfrac{1}{2} h_0 \br{\Yr}{\Yr}.
\end{equation}

\subsection{Recurrence relation}%
\label{sub:recurrence_relation}

We can now substitute the series expansion~\eqref{eq:Yr_series} for $\Yr$ and
gather all the terms of order $k$ to acquire the following recurrence relation
\begin{equation}\label{eq:recurrence_k}
	\Psi\ord{k} = - h_0 Q_1 \Psi\ord{k-1}
	- \tfrac{1}{2} h_0 \sum_{m = 1}^{k-1} \br{\Psi\ord{m}}{\Psi\ord{k - m}}.
\end{equation}
In particular, for $k = 1, 2, 3$ they read
\begin{align}
	\Psi\ord{1} &= - h_0 Q_1 \Psi\ord{0}, \\
	\Psi\ord{2} &= - h_0 Q_1 \Psi\ord{1}
	- \tfrac{1}{2} h_0 \br{\Psi\ord{1}}{\Psi\ord{1}}, \\
	\Psi\ord{3} &= - h_0 Q_1 \Psi\ord{2}
	- h_0 \br{\Psi\ord{1}}{\Psi\ord{2}}.
\end{align}
These equations can be solved, producing expressions for eigenfunction and
eigenvalue corrections. For $k = 1$ we have
\begin{equation}
	\begin{pmatrix}
		\theta \psi\ord{1} \\
		c E\ord{1}
	\end{pmatrix}
	= -
	\begin{pmatrix}
		\dc G_0 & 0 \\
		\dq Y_0 & 0
	\end{pmatrix}
	\begin{pmatrix}
		c V & 0 \\
		0 & 0
	\end{pmatrix}
	\begin{pmatrix}
		\theta \psi\ord{0} \\
		c E\ord{0}
	\end{pmatrix}
	=
	\begin{pmatrix}
		- \theta G_0 V \psi\ord{0} \\
		c (\psi\ord{0}, V \psi\ord{0})
	\end{pmatrix},
\end{equation}
where we used the explicit form of $Y_0$.  Skipping some computations, for $k =
2$ we get
\begin{align}
	\begin{pmatrix}
		\theta \psi\ord{2} \\
		c E\ord{2}
	\end{pmatrix}
	&=
	\begin{pmatrix}
		\theta G_0 V G_0 V \psi\ord{0} \\
		- c Y_0 V G_0 V \psi\ord{0}
	\end{pmatrix}
	+
	\begin{pmatrix}
		- \theta (Y_0 V \psi\ord{0}) G_0 G_0 V \psi\ord{0} \\
		c (Y_0 V \psi\ord{0}) Y_0 G_0 V \psi\ord{0}
	\end{pmatrix}
	\\
	&=
	\begin{pmatrix}
		\theta \left( G_0 V G_0 V \psi\ord{0}
		- (\psi\ord{0}, V \psi\ord{0}) G_0^2 V \psi\ord{0} \right) \\
		- c (\psi\ord{0}, V G_0 V \psi\ord{0})
	\end{pmatrix},
\end{align}
where in the second equality we used the fact that $Y_0 G_0 \varphi = 0$ for any
$\varphi$.

\begin{remark}\label{rem:eig_cor}
	It is easy to see that only the first term in~\eqref{eq:recurrence_k}
	provides a correction to the eigenvalue due to $Y_0 G_0 = 0$.
\end{remark}

The simple structure of the energy corrections is broken in order $k
= 3$. In components:
\begin{align}
	\psi\ord{3} &= - G_0 V G_0 V G_0 V \psi\ord{0}
	+ (\psi\ord{0}, V \psi\ord{0}) G_0 V G_0^2 V \psi\ord{0} \\
	&+ (\psi\ord{0}, V G_0 V \psi\ord{0}) G_0^2 V \psi\ord{0}
	+ (\psi\ord{0}, V \psi\ord{0}) G_0^2 V G_0 V \psi\ord{0} \\
	&- (\psi\ord{0}, V \psi\ord{0})^2 G_0^3 V \psi\ord{0}, \\
	E\ord{3} &= (\psi\ord{0}, V G_0 V G_0 V \psi\ord{0})
	- (\psi\ord{0}, V \psi\ord{0}) (\psi\ord{0}, V G_0^2 V \psi\ord{0}).
\end{align}

We can see that this procedure reproduces the familiar formulas found in every
textbook on quantum mechanics (cf.~\cite{sakurai}).

\subsection{Corrections as sums over tree diagrams}%
\label{sub:corrections_as_sums_over_trees}

Let us introduce the following graphical notation for the various elements in
the recurrence relation:
\begin{equation}
	h_0 =
	\begin{tikzpicture}[semithick, circuit ee IEC,
		every resistor/.style={circuit symbol size=width 2 height .75}]
		\draw (0,0) to[resistor] (1,0);
	\end{tikzpicture}
	~, \quad Q_1 =
	\begin{tikzpicture}[semithick, circuit ee IEC, baseline=-3pt]
		\draw (0,0) -- ++(.4,0) node[contact] {} -- ++(.4,0);
	\end{tikzpicture}
	~, \quad \tfrac{1}{2}\br{-}{-} =
	\begin{tikzpicture}[semithick, baseline=-3pt]
		\draw (0,0) -- ++(.5,0) |- ++(.5,.25) ++(-.5,-.25) |- +(.5,-.25);
	\end{tikzpicture}
	~, \quad \Psi\ord{0} =
	\begin{tikzpicture}[semithick, baseline=-3pt]
		\draw[-o] (0,0) -- ++(.5,0);
	\end{tikzpicture}~.
\end{equation}
This allows us to rewrite the corrections in the form of tree diagrams:
\begin{align}
	\Psi\ord{1} &=
	\begin{tikzpicture}[semithick, baseline=-3pt, circuit ee IEC,
		every resistor/.style={circuit symbol size=width 2 height .75}]
		\draw[-o] (0,0) to[resistor] ++(1,0) -- ++(.1,0) node[contact] {} --
		++(.5,0);
	\end{tikzpicture}
	~, \\[3mm]
	\Psi\ord{2} &= 
	\begin{tikzpicture}[semithick, baseline=-3pt, circuit ee IEC,
		every resistor/.style={circuit symbol size=width 2 height .75}]
		\draw[-o] (0,0) to[resistor] ++(1,0) -- ++(.1,0) node[contact] {}
		to[resistor] ++(1,0) -- ++(.1,0) node[contact] {} -- ++(.5,0);
	\end{tikzpicture}
	\pm
	\begin{tikzpicture}[semithick, baseline=-3pt, circuit ee IEC,
		every resistor/.style={circuit symbol size=width 2 height .75}]
		\draw (0,0) to[resistor] ++(1,0) |- ++(.1,.25)
		coordinate (a) ++(-.1,-.25) |- +(.1,-.25) coordinate (b);
		\draw[-o] (a) to[resistor] ++(1,0) -- ++(.1,0) node[contact] {} --
		++(.5,0);
		\draw[-o] (b) to[resistor] ++(1,0) -- ++(.1,0) node[contact] {} --
		++(.5,0);
	\end{tikzpicture}
	~, \\[3mm]
	\Psi\ord{3}
	&= 
	\begin{tikzpicture}[semithick, baseline=-3pt, circuit ee IEC,
		every resistor/.style={circuit symbol size=width 2 height .75}]
		\draw[-o] (0,0) to[resistor] ++(1,0) -- ++(.1,0) node[contact] {
		}to[resistor] ++(1,0) -- ++(.1,0) node[contact] {}
		to[resistor] ++(1,0) -- ++(.1,0) node[contact] {} -- ++(.5,0);
	\end{tikzpicture}
	\pm
	\begin{tikzpicture}[semithick, baseline=-3pt, circuit ee IEC,
		every resistor/.style={circuit symbol size=width 2 height .75}]
		\draw (0,0) to[resistor] ++(1,0) -- ++(.1,0) node[contact] {}
		to[resistor] ++(1,0) |- ++(.1,.25)
		coordinate (a) ++(-.1,-.25) |- +(.1,-.25) coordinate (b);
		\draw[-o] (a) to[resistor] ++(1,0) -- ++(.1,0) node[contact] {} --
		++(.5,0);
		\draw[-o] (b) to[resistor] ++(1,0) -- ++(.1,0) node[contact] {} --
		++(.5,0);
	\end{tikzpicture}
	\\
	&\pm
	\begin{tikzpicture}[semithick, baseline=-3pt, circuit ee IEC,
		every resistor/.style={circuit symbol size=width 2 height .75}]
		\draw (0,0) to[resistor] ++(1,0) |- ++(.1,.25)
		coordinate (a) ++(-.1,-.25) |- +(.1,-.25) coordinate (b);
		\draw[-o] (a) to[resistor] ++(1,0) -- ++(.1,0) node[contact] {} --
		++(.5,0);
		\draw[-o] (b)
		to[resistor] ++(1,0) -- ++(.1,0) node[contact] {}
		to[resistor] ++(1,0) -- ++(.1,0) node[contact] {}
		-- ++(.5,0);
	\end{tikzpicture}
	\pm
	\begin{tikzpicture}[semithick, baseline=-3pt, circuit ee IEC,
		every resistor/.style={circuit symbol size=width 2 height .75}]
		\draw (0,0) to[resistor] ++(1,0) |- ++(.1,.25)
		coordinate (a) ++(-.1,-.25) |- +(.1,-.25) coordinate (b);
		\draw[-o] (b) to[resistor] ++(1,0) -- ++(.1,0) node[contact] {} --
		++(.5,0);
		\draw[-o] (a)
		to[resistor] ++(1,0) -- ++(.1,0) node[contact] {}
		to[resistor] ++(1,0) -- ++(.1,0) node[contact] {}
		-- ++(.5,0);
	\end{tikzpicture}
	\\
	&\pm
	\begin{tikzpicture}[semithick, baseline=-3pt, circuit ee IEC,
		every resistor/.style={circuit symbol size=width 2 height .75}]
		\draw (0,0) to[resistor] ++(1,0) |- ++(.1,.25)
		coordinate (a) ++(-.1,-.25) |- +(.1,-.25) coordinate (b);
		\draw[-o] (a) to[resistor] ++(1,0) -- ++(.1,0) node[contact] {} --
		++(.5,0);
		\draw (b) to[resistor] ++(1,0) |- ++(.1,.25)
		coordinate (b1) ++(-.1,-.25) |- +(.1,-.25) coordinate (b2);
		\draw[-o] (b1) to[resistor] ++(1,0) -- ++(.1,0) node[contact] {} --
		++(.5,0);
		\draw[-o] (b2) to[resistor] ++(1,0) -- ++(.1,0) node[contact] {} --
		++(.5,0);
	\end{tikzpicture}
	\pm
	\begin{tikzpicture}[semithick, baseline=-3pt, circuit ee IEC,
		every resistor/.style={circuit symbol size=width 2 height .75}]
		\draw (0,0) to[resistor] ++(1,0) |- ++(.1,.25)
		coordinate (b) ++(-.1,-.25) |- +(.1,-.25) coordinate (a);
		\draw[-o] (a) to[resistor] ++(1,0) -- ++(.1,0) node[contact] {} --
		++(.5,0);
		\draw (b) to[resistor] ++(1,0) |- ++(.1,.25)
		coordinate (b1) ++(-.1,-.25) |- +(.1,-.25) coordinate (b2);
		\draw[-o] (b1) to[resistor] ++(1,0) -- ++(.1,0) node[contact] {} --
		++(.5,0);
		\draw[-o] (b2) to[resistor] ++(1,0) -- ++(.1,0) node[contact] {} --
		++(.5,0);
	\end{tikzpicture}
	\\[3mm]
	&= 
	\begin{tikzpicture}[semithick, baseline=-3pt, circuit ee IEC,
		every resistor/.style={circuit symbol size=width 2 height .75}]
		\draw[-o] (0,0) to[resistor] ++(1,0) -- ++(.1,0) node[contact] {
		}to[resistor] ++(1,0) -- ++(.1,0) node[contact] {}
		to[resistor] ++(1,0) -- ++(.1,0) node[contact] {} -- ++(.5,0);
	\end{tikzpicture}
	\pm
	\begin{tikzpicture}[semithick, baseline=-3pt, circuit ee IEC,
		every resistor/.style={circuit symbol size=width 2 height .75}]
		\draw (0,0) to[resistor] ++(1,0) -- ++(.1,0) node[contact] {}
		to[resistor] ++(1,0) |- ++(.1,.25)
		coordinate (a) ++(-.1,-.25) |- +(.1,-.25) coordinate (b);
		\draw[-o] (a) to[resistor] ++(1,0) -- ++(.1,0) node[contact] {} --
		++(.5,0);
		\draw[-o] (b) to[resistor] ++(1,0) -- ++(.1,0) node[contact] {} --
		++(.5,0);
	\end{tikzpicture}
	\\
	&\pm 2 \times
	\begin{tikzpicture}[semithick, baseline=-3pt, circuit ee IEC,
		every resistor/.style={circuit symbol size=width 2 height .75}]
		\draw (0,0) to[resistor] ++(1,0) |- ++(.1,.25)
		coordinate (a) ++(-.1,-.25) |- +(.1,-.25) coordinate (b);
		\draw[-o] (a) to[resistor] ++(1,0) -- ++(.1,0) node[contact] {} --
		++(.5,0);
		\draw[-o] (b)
		to[resistor] ++(1,0) -- ++(.1,0) node[contact] {}
		to[resistor] ++(1,0) -- ++(.1,0) node[contact] {}
		-- ++(.5,0);
	\end{tikzpicture}
	\\
	&\pm 2 \times
	\begin{tikzpicture}[semithick, baseline=-3pt, circuit ee IEC,
		every resistor/.style={circuit symbol size=width 2 height .75}]
		\draw (0,0) to[resistor] ++(1,0) |- ++(.1,.25)
		coordinate (a) ++(-.1,-.25) |- +(.1,-.25) coordinate (b);
		\draw[-o] (a) to[resistor] ++(1,0) -- ++(.1,0) node[contact] {} --
		++(.5,0);
		\draw (b) to[resistor] ++(1,0) |- ++(.1,.25)
		coordinate (b1) ++(-.1,-.25) |- +(.1,-.25) coordinate (b2);
		\draw[-o] (b1) to[resistor] ++(1,0) -- ++(.1,0) node[contact] {} --
		++(.5,0);
		\draw[-o] (b2) to[resistor] ++(1,0) -- ++(.1,0) node[contact] {} --
		++(.5,0);
	\end{tikzpicture}
	~,
\end{align}
where the factors of $2$ appear due to the bracket being symmetric on odd
elements. Even though we do not aim to provide a rule for determining the
relative signs, we can see that every sum of order $k$ contains all the
connected tree diagrams with $k$ vertices
$\tikz[circuit ee IEC, semithick]
{\draw (0,0) -- ++(.3,0) node[contact] {} -- ++(.3,0);}$
that can be constructed according to the following rules:
\begin{enumerate}
	\item These combinations of elements are prohibited:
	\begin{equation}
		\begin{tikzpicture}[semithick, circuit ee IEC,
			every resistor/.style={circuit symbol size=width 2 height .75}]
			\draw (0,0) to[resistor] ++(.8,0) to[resistor] ++(.8,0);
		\end{tikzpicture}
		\quad
		\begin{tikzpicture}[semithick, circuit ee IEC,
			every resistor/.style={circuit symbol size=width 2 height .75}]
			\draw (0,0) to[resistor] ++(1,0);
			\draw[-o] (1,0) -- ++(.2,0);
		\end{tikzpicture}
		\quad
		\begin{tikzpicture}[semithick, circuit ee IEC, baseline=-3pt]
			\draw (0,0) -- ++(.25,0) node[contact] {} -- ++(.5,0) node[contact]
			{}-- ++(.25,0);
		\end{tikzpicture}
		\quad
		\begin{tikzpicture}[semithick, baseline=-3pt]
			\draw (0,0) -- ++(.25,0) |- ++(.25,.25) ++(-.25,-.25) |- +(.25,-.25)
			coordinate (a);
			\draw[-o] (a) -- ++(.2,0);
		\end{tikzpicture}
		\quad
		\begin{tikzpicture}[semithick, baseline=-3pt, circuit ee IEC]
			\draw (0,0) -- ++(.25,0) node[contact] {} -- ++(.4,0)
			|- ++(.25,.25) ++(-.25,-.25) |- +(.25,-.25);
		\end{tikzpicture}
		\quad
		\begin{tikzpicture}[semithick, baseline=-3pt, circuit ee IEC]
			\draw (0,0) -- ++(.25,0) |- ++(.25,.25) ++(-.25,-.25)
			|- ++(.25,-.25) node[contact] {} -- ++(.25,0);
		\end{tikzpicture}
		;
	\end{equation}
	\item The diagram has
		$\tikz[semithick, baseline=-3pt, circuit ee IEC, every
		resistor/.style={circuit symbol size=width 2 height .75}] {\draw (0,0)
		to[resistor] ++(1,0);}$
		as its root;
	\item The diagram has
		$\tikz[baseline=-3pt] {\draw[semithick,-o] (0,0) -- ++(.3,0);}$
		as every leaf.
\end{enumerate}

\begin{remark}
	As an immediate corollary of remark~\ref{rem:eig_cor} we see that diagrams
	beginning with
	\begin{equation}
		\begin{tikzpicture}[semithick, baseline=-3pt, circuit ee IEC,
			every resistor/.style={circuit symbol size=width 2 height .75}]
			\draw (0,0) to[resistor] ++(1,0) |- ++(.1,.25)
			coordinate (a) ++(-.1,-.25) |- +(.1,-.25) coordinate (b);
			\draw (a) to[resistor] ++(1,0);
			\draw (b) to[resistor] ++(1,0);
		\end{tikzpicture}
		~= \tfrac{1}{2} h_0 \br{h_0(-)}{h_0(-)}
	\end{equation}
	do not contribute to the eigenvalue corrections.
\end{remark}

\section{Conclusion}%
\label{sec:conclusion}

Quantum mechanics provides a playground for homological methods which can
further be applied to quantum field theories and string
theories~\cite{losev2019}. Thus, the utility of this work lies not in the
well-known formulas for perturbative corrections, but in paving the way for
further development of such methods~--- specifically in writing the
time-independent Schr\"odinger equation as a Maurer-Cartan equation and twisting
the differential. We see two directions inviting further investigation.

Firstly, energy level degeneracy makes the cohomology of $\Qt_0$
two-di\-men\-sional (four-dimensional, counting the ghost degrees of freedom),
and requires a different approach to calculate the corrections. We expect that
homotopy transfer and the perturbation lemma (see, for
instance,~\cite{crainic2004}) will play a crucial role there.

Secondly, in the spirit of our previous work~\cite{losev2023} we can introduce
symmetries of the Hamiltonian (conserved charges) into the differential. This
would most likely result in relations describing how the perturbation breaks
these symmetries.

\subsection*{Acknowledgements}%

This work was done under support of the grant №075-15-2022-289 for creation and
development of Euler International Mathematical Institute.


\begin{thebibliography}{10}
\bibitem{losev2023}
	A. Losev, T. Sulimov,
		``New Objects in Scattering Theory with Symmetries,''
		Jetp Lett. \textbf{117}, 487–491 (2023).
\bibitem{shadrin2022}
	V. Dotsenko, S. Shadrin, B. Vallette,
		``Maurer-Cartan methods in deformation theory: the twisting procedure,''
		Cambridge University Press (2023).
\bibitem{sakurai}
	J. J. Sakurai, J. Napolitano,
		``Modern Quantum Mechanics,'' (3rd ed.), Cambridge (2020).
\bibitem{losev2019}
	A.~Losev, 
		``TQFT, homological algebra and elements of K. Saito’s theory of
		Primitive form: an attempt of mathematical text written by mathematical
		physicist,''
		Primitive Forms and Related Subjects—Kavli IPMU 2014, pp.  269-293.
		Mathematical Society of Japan, 2019. e-Print: 2301.01390
\bibitem{crainic2004}
	M. Crainic,
		``On the perturbation lemma, and deformations,''
		[arXiv:math/0403266 [math.AT]]
\end{thebibliography}
\end{document}